\documentclass[journal]{IEEEtran}

\ifCLASSINFOpdf
\else
\fi

\usepackage{graphicx} 
\usepackage{epstopdf} 
\usepackage{mathptmx} 
\usepackage{times} 
\usepackage{amsmath} 
\usepackage{amsthm}
\usepackage{amssymb}  
\usepackage{subfigure}
\usepackage{cite}
\usepackage{color}

\usepackage{algorithm}  
\usepackage{algorithmicx}  
\usepackage{algpseudocode}

\DeclareMathOperator{\rank}{rank}
\newtheorem{theorem}{Theorem}
\newtheorem{remark}{Remark}

\newcommand{\Rmnum}[1]{\uppercase\expandafter{\romannumeral #1}}  

\begin{document}

\title{An Adaptive Longitudinal Platooning \\
Design Based On Concurrent Learning
}

\author{Qiuhao Wen, Di Liu,~\IEEEmembership{Member,~IEEE}, Jiwei Wang, and Simone Baldi,~\IEEEmembership{Senior Member,~IEEE}
\vspace{-0.25cm}
\thanks{This research was supported by the European Union Horizon 2020
R\&I programme Marie Sklodowska-Curie grant 899987, by the Natural Science Foundation of China grants 62150610499 and 62073074 (corresponding authors: S. Baldi and D. Liu)\newline
Q. Wen and S. Baldi are with School of Mathematics, Southeast University, China {\tt\small s.baldi@tudelft.nl}\newline 
D. Liu is with School of Computation, Information and Technology, Technical University of Munich (TUM), Germany, and 
 with Visual Intelligence for Transportation lab, Ecole Polytechnique
Federale de Lausanne (EPFL), Switzerland {\tt\small di.liu@tum.de} \newline
J. Wang is with Bernoulli Institute for Mathematics, Computer Science \& Artificial Intelligence, University of Groningen, The Netherlands {\tt\small  jiwei.wang@rug.nl}
}
}

%
%

\IEEEtitleabstractindextext{
\begin{abstract}
    This work proposes a new adaptive longitudinal platooning strategy in the framework of concurrent learning. 
    Adaptive refers to vehicles facing uncertainty in powertrain parameters via on-line estimation; concurrent learning refers to using both current and past data in the estimation. 
    The proposed platooning strategy advances existing ones since convergence to the true powertrain parameters is guaranteed without imposing persistence of excitation on the vehicle behavior: it suffices the presence of a single non-zero data sample. 
    Meanwhile, the concurrent learning proof we give advances existing ones since it takes into account an extra unknown gain in the error dynamics.
\vspace{-0.1cm}
\end{abstract}
\begin{IEEEkeywords}
Longitudinal platooning, automated vehicles, adaptive control, concurrent learning. \vspace{-0.1cm}
\end{IEEEkeywords}
}

\maketitle

\thispagestyle{empty}
\pagestyle{empty}

\IEEEdisplaynontitleabstractindextext

%
\IEEEpeerreviewmaketitle

\section{Introduction}
Cooperative Adaptive Cruise Control (CACC) refers to a family of distributed control strategies that make use of {\color{black}on-board sensing (radar, tachometer, accelerometer, etc.) and inter-vehicle wireless communication} to establish longitudinal vehicle platoons with desired inter-vehicle distance~\cite{c1}. 
{\color{black}In CACC, the terms `adaptive' and `control' are not used in the sense of adaptive control theory,
but in the sense of handling different traffic/speed regimes, 
e.g.,  mixed traffic with automated and human-driven vehicles~\cite{c3,c4,c5}.
Actually, most CACC strategies are not designed based on adaptive control theory: 
uncertainties stemming from vehicle dynamics have been traditionally addressed in CACC in a robust control sense~\cite{c6,c7}, 
while uncertainties stemming from unreliable wireless communication have been addressed in a stochastic~\cite{c8,c9} or switched control sense~\cite{c10,c12}.} 

{\color{black}
A recent new perspective to CACC has shown that the distributed control setting~\cite{c13,c14} reduces to a decentralized one when some disturbance decoupling properties are enforced~\cite{c15,c16}. 
The interest of such perspective is that disturbance decoupling automatically guarantees other properties traditionally sought in platooning, such as string stability (i.e., rejection of noises propagating through the platoon). 
}
{\color{black}
CACC strategies truly based on adaptive control theory, e.g., Model Reference Adaptive Control (MRAC)~\cite{c17,c18}, or learning methods~\cite{c19,c20}, 
embed estimation mechanisms in the CACC protocol, so as to estimate the uncertainty on-line. 
}
The most typical uncertainty is the time constant of the powertrain, which {\color{black}in turn} is reflected into uncertainty of the control gains.
{\color{black}
When applying adaptive and learning tools into CACC, a natural question is whether estimated uncertainty will converge to its actual value. 
While this typically requires Persistence of Excitation (PE) conditions~\cite{c30}, 
several studies have been done to relax PE conditions,} one of the first being the Concurrent Learning (CL) framework~\cite{c21}. 
In concurrent learning, adaptation relies on both current and past data, so that PE conditions are relaxed to rank conditions on data. 
The framework is general enough to be applicable to switched systems~\cite{c22}, time-varying systems~\cite{c23,c24}, and classes of nonlinear systems~\cite{c25,c26,c27}.

{\color{black}Relaxed PE conditions have attracted increasing interest and studies in this sense include initial/finite excitation~\cite{c31,c32}, 
dynamic regressor extension and mixing~\cite{c33},} 
and data informativity~\cite{c34}{\color{black};} 
while each one of these frameworks is interesting per se, in this work we propose a new CACC strategy in the framework of {\color{black}CL}. 
In particular, we focus on the disturbance decoupling perspective of~\cite{c15,c16}, {\color{black}motivated by recent CACC designs able to attain disturbance decoupling using adaptive control~\cite{c28,c29}.} 
{\color{black}The main contribution of this work is to prove CL in CACC applications with disturbance decoupling.} 
{\color{black}
We advance the state-of-the-art CACC in~\cite{c28,c29} by reformulating the adaptive controller as indirect MRAC,
to estimate a single parameter (the powertrain time constant) instead of four control gains.
While a single estimate already decreases the requirements for PE, we prove that PE conditions can be removed using CL.}
A single non-zero data sample is enough  to guarantee convergence to the true powertrain parameter. 
{\color{black}The CL proof we give advances available ones~\cite{c21}, since it takes into account an extra unknown gain in the error dynamics.} 

The rest of the paper is organized as follows:
{\color{black}Section \Rmnum{2} gives the problem formulation.}
A platooning strategy based on {\color{black}indirect} MRAC is formulated in Section \Rmnum{3};
{\color{black}PE is overcome in Section \Rmnum{4} by CL.}
{\color{black}
Section \Rmnum{5} contains simulations, while Section \Rmnum{6} presents conclusions.
}
\section{Control Problem Statement}
For a pair of predecessor-follower vehicles, indexed as $\{p, f\}$, longitudinal platooning dynamics are usually expressed in the literature~\cite{c5,c10,c14,c15} as follows
\begin{equation}\label{standard dyanamics}
    \begin{aligned}
        \dot{s}_i(t)&=v_i(t),\\
        \dot{v}_i(t)&=a_i(t),\\
        \tau_i\dot{a}_i(t)&=-a_i(t)+u_i(t),
    \end{aligned}
    \quad i\in\{p,f\},
\end{equation}
where $s_i, v_i, a_i$ represent the longitudinal position, velocity, acceleration of vehicle $i\in \{p,f\}$, 
and $u_i$ represents the desired acceleration that enters through a first-order system with time constant {\color{black}$\tau_i>0$ that \emph{will be assumed unknown}}. 
{\color{black}This first-order dynamics represents the powertrain, $\tau_i$ being the time constant for reaching the desired acceleration.} 

The longitudinal platooning task is established through the spacing error, defined as follows.
Let
\begin{equation}
    d_f(t)=s_p(t)-s_f(t)\label{real distance}
\end{equation}
denote the inter-vehicle distance between $f$ and $p$, and let
\begin{equation}
    d_f^*(t)=hv_f(t)+r\label{desired distance}
\end{equation}
denote a desired inter-vehicle spacing policy, with $h>0$ being the time headway and $r \geq 0$ a standstill distance. 
The desired spacing policy (\ref{desired distance}) represents a typical driving behavior of increasing the inter-vehicle distance as the velocity increases~\cite{c1,c4}. 
Without loss of generality, we take $r=0$, as the standstill distance can be removed by coordinate shift.
The spacing error is
\begin{equation}
    e_f(t)=d_f(t)-d_f^*(t)=s_p(t)-s_f(t)-hv_f(t).\label{spacing error}
\end{equation}

{\color{black}Let us also define $\nu_f(t) = v_p(t)-v_f(t)$ as the relative velocity between the two vehicles.} 
The control problem is formulated as follows.

\textbf{Control Problem:}
    Consider the predecessor-follower model (\ref{standard dyanamics}) with desired spacing policy (\ref{desired distance}).
    The control problem is to design an adaptive controller $u_f$ such that, 
    for any unknown $\tau_i$,\ $i \in \{p,f\}$, the following objectives are achieved:
    \begin{enumerate}
        \item For any bounded $u_p(\cdot)$, it holds that
        \begin{equation}
            \lim_{t\rightarrow\infty}e_f(t)=0;\label{closed loop stability}
        \end{equation} 
        \item For any bounded $u_p(\cdot)$, it holds that
        \begin{equation}
            \lim_{t \rightarrow \infty} \int_t^{t+T}(\lvert v_f(\ell) \rvert ^2 -\lvert v_p(\ell) \rvert ^2) d \ell \le 0,\quad \forall T>0.\label{closed loop string stability}
        \end{equation}
    \end{enumerate}
\begin{remark}[Meaning of (\ref{closed loop stability})-(\ref{closed loop string stability})]
    As shown in~\cite{c15,c16}, {\color{black}item 1)} in the Control Problem guarantees disturbance decoupling and output stability with $u_p$ acting as a disturbance.
    {\color{black}Item 2) is a way to represent string stability,} i.e., velocity perturbations should not amplify from the predecessor to the follower vehicle.
\end{remark}
\begin{remark}[Disturbance decoupling implies string stability]
    As shown in~\cite{c15,c16} and later in~\cite{c28}, disturbance decoupling automatically implies string stability, 
    that is, {\color{black}item 1) implies item 2).} 
    Therefore, in the rest of the work we focus on item 1) for compactness and without loss of generality.
\end{remark}

To represent the dynamics of the predecessor-follower system, let us define $\xi^\top=[\xi^\top_p \ \xi^\top_f]$, where $\xi^\top_i=[s_i \ v_i \ a_i]\in\mathbb{R}^3,\ i\in\{p,f\}$.
From (\ref{standard dyanamics}), we obtain the dynamics
\begin{equation}
    \dot{\xi}(t)=\begin{bmatrix}
        \bar{A}_p & 0\\
        0 & \bar{A}_f
    \end{bmatrix}\xi(t)+\begin{bmatrix}
        0\\
        \bar{B}_f
    \end{bmatrix}u_f(t)+\begin{bmatrix}
        \bar{B}_p\\
        0
    \end{bmatrix}u_p(t),\label{cruise control open loop}
\end{equation}
with matrices $\bar{A}_i,\ \bar{B}_i$ given by
\[
    \bar{A}_i=\begin{bmatrix}
        0 & 1 & 0\\
        0 & 0 & 1\\
        0 & 0 & -\tau_i^{-1} 
    \end{bmatrix},\
    \bar{B}_i=\begin{bmatrix}
        0\\
        0\\
        \tau_i^{-1}
    \end{bmatrix},\
    i\in\{p,f\}.
\]

Yet, {\color{black}a more convenient representation of the predecessor-follower system is by defining $x = [e_f\ \nu_f\ a_f]^\top \in \mathbb{R}^3$, leading to}
\begin{equation}
    \dot{x}(t)=A x(t) + B u_f(t) + G a_p(t),
    \label{lower dimensional closed loop systems}
\end{equation}
with 
\begin{equation}
    A=\begin{bmatrix}
        0 &1 &-h\\
        0 &0 &-1\\
        0 &0 &-\tau_f^{-1}
    \end{bmatrix},
    B=\begin{bmatrix}
        0\\
        0\\
        \tau_f^{-1}
    \end{bmatrix},
    G=\begin{bmatrix}
        0\\
        1\\
        0
    \end{bmatrix}.
    \label{actual model matrices}
\end{equation}
The advantage of $x$ in (\ref{lower dimensional closed loop systems}) as compared to $\xi$ in (\ref{cruise control open loop}) is to give a lower dimensional representation of the platooning system. 
{\color{black}Note that the exogenous input is $u_p$ in (\ref{cruise control open loop}) and $a_p$ in (\ref{lower dimensional closed loop systems}).} 
This is without loss of generality from the point of view of the Control Problem: in fact, it can be inferred from the first-order system in (\ref{standard dyanamics}) that any bounded $u_p(\cdot)$ results in a bounded $a_p(\cdot)$.

{\color{black}
If $\tau_i$ were known,~\cite{c28} has shown that the controller solving the Control Problem would be linear, $u_f(t)=F\xi(t)$ with
\begin{equation}
    F=[\theta_1 \ \ \theta_2 \ \ \tau_f h^{-1} \ \ -\hspace{-0.075cm}\theta_1 \ \ -\hspace{-0.075cm}\theta_2\hspace{-0.075cm}-\hspace{-0.075cm}h\theta_1 \ \ 1\hspace{-0.075cm}-\hspace{-0.075cm}\tau_f h^{-1}\hspace{-0.1cm}-\hspace{-0.075cm}h\theta_2],
    \label{input matrix}
\end{equation}
with arbitrary $\theta_1,\ \theta_2>0$.}
Equivalently, one can write the same controller as $u_f(t)=K  x(t)+L a_p(t)$ with
\begin{equation}
    K = [\theta_1\ \ \theta_2\ \ 1\hspace{-0.075cm}-\hspace{-0.075cm}\tau_f h^{-1}\hspace{-0.1cm}-\hspace{-0.075cm}h\theta_2], \quad L = \tau_f h^{-1},
    \label{controller gains}
\end{equation}
containing a feedback term and a feedforward term. 
Such a formulation is useful for MRAC, 
where the control action indeed contains feedback and feedforward terms~\cite{c30}.

{\color{black}Being the powertrain time constant $\tau_f$ unknown, the linear controller (\ref{input matrix}) (equivalently, (\ref{controller gains})), cannot be implemented.} 
This is why we next look for an adaptive controller in place of a linear one. 

\section{Indirect MRAC Design for Platooning}
{\color{black}We will present an indirect MRAC design as the basis of the strategy to be proposed. 
The MRAC design will be illustrated in next three steps.}
\subsection{First Step: Design of a Reference Model}
A suitable reference model is obtained upon defining a `virtual' vehicle, which will be indexed by $\bar{f}$.
Clearly, vehicle $\bar{f}$ should follow similar dynamics as (\ref{standard dyanamics}), i.e.,
\begin{equation}
    \begin{aligned}
        \dot{s}_{\bar{f}}(t)&=v_{\bar{f}}(t),\\
        \dot{v}_{\bar{f}}(t)&=a_{\bar{f}}(t),\\
        \tau_{\bar{f}} \dot{a}_{\bar{f}}(t)&=-a_{\bar{f}}(t)+u_{\bar{f}}(t),
    \end{aligned}
    \label{virtual dyanamics}
\end{equation}
where $\tau_{\bar{f}}>0$ is a nominal powertrain time constant chosen by the designer.
Then, consider the controller
\begin{equation}
    u_{\bar{f}}(t)\hspace{-0.075cm}=\hspace{-0.075cm}\theta_1 e_{\bar{f}}(t) + \theta_2 \nu_{\bar{f}}(t) + (1\hspace{-0.075cm}-\hspace{-0.075cm}\tau_{\bar{f}}h^{-1}\hspace{-0.1cm}-\hspace{-0.075cm}h\theta_2)a_{\bar{f}}(t) +\tau_{\bar{f}}h^{-1}a_p(t),
    \label{reference model input}
\end{equation}
{\color{black}sharing the same form as (\ref{input matrix}),} with $\tau_{\bar{f}}$ in place of $\tau_{f}$.

By defining $\bar{x} = [e_{\bar{f}}\ \nu_{\bar{f}}\ a_{\bar{f}}]^\top$ and closing the loop with (\ref{reference model input}) and (\ref{virtual dyanamics}), we obtain the reference model
\begin{equation}
    \dot{\bar{x}}(t) =\bar{A}\bar{x}(t) + \bar{G}a_p(t),
    \label{reference model}
\end{equation}
where
\begin{equation}
    \bar{A}=\begin{bmatrix}
        0 &1 &-h\\
        0 &0 &-1\\
        \theta_1\tau_{\bar{f}}^{-1} &\theta_2\tau_{\bar{f}}^{-1} &-h^{-1}\hspace{-0.075cm}-\hspace{-0.075cm}h\theta_2\tau_{\bar{f}}^{-1}
    \end{bmatrix},
    \bar{G}=\begin{bmatrix}
        0\\
        1\\
        h^{-1}
    \end{bmatrix}.
    \label{reference model matrices}
\end{equation}
Note that $\bar{A}$ is Hurwitz, cf.~\cite{c28}.
Because (\ref{reference model input}) has the same form as (\ref{input matrix}), it satisfies disturbance decoupling and string stability (cf. Control Problem) for the reference model (\ref{reference model}).

\subsection{Second Step: Design of a Model-Matching Control}
{\color{black}To elaborate the process of control design, let us start with an ideal case, i.e., $\tau_f$ is known.
In this case, consider} 
\begin{equation}
    u_f^*(t) = a_f(t) + \tau_f\phi(x(t),a_p(t)),
    \label{ideal controller}
\end{equation}
where $u_f^*$ stands for an ideal version of $u_f$, and $\phi(x,a_p)=\bar{K} x + h^{-1} a_p$ with $\bar{K}=[\theta_1\tau_{\bar{f}}^{-1}\ \theta_2\tau_{\bar{f}}^{-1}\ -\hspace{-0.075cm}(h^{-1}\hspace{-0.075cm}+\hspace{-0.075cm}h\theta_2\tau_{\bar{f}}^{-1})]$.
In the following, for brevity, we will denote $\phi(x(t),a_p(t))$ as $\phi(t)$.
The controller (\ref{ideal controller}) is ideal because by substituting it into (\ref{lower dimensional closed loop systems}), together with (\ref{reference model}) and (\ref{reference model matrices}), gives
\begin{equation}
    {\color{black} \dot{\tilde{x}}(t)=\bar{A}\tilde{x}(t),}
    \label{ideal tracking error}
\end{equation}
where $\tilde{x}=x-\bar{x}$ is the tracking error between the closed-loop system and the reference model. 
In (\ref{ideal tracking error}), $\tilde{x}$ converges to zero as $\bar{A}$ is Hurwitz,
which means that the closed-loop system converges to the same behavior of the reference model.

{\color{black}As $\tau_f$ in (\ref{actual controller}) is unknown,} let us propose, instead of (\ref{ideal controller}), an adaptive controller given by
\begin{equation}
    u_f(t)=a_f(t)+\hat{\tau}_f(t)\phi(t),
    \label{actual controller}
\end{equation}
with $\hat{\tau}_f$ being an estimate of $\tau_f$: the problem is to design an appropriate adaptive law for $\hat{\tau}_f$.
To this purpose, let us calculate the dynamics of $\tilde{x}$ resulting from (\ref{lower dimensional closed loop systems}) and (\ref{reference model}) with controller (\ref{actual controller}), that is
\begin{equation}
    \dot{\tilde{x}}(t)=\bar{A}\tilde{x}(t)+\tilde{B}\frac{\tilde{\tau}_f(t)}{\tau_f}\phi(t),
    \label{actual tracking error}
\end{equation}
where $\tilde{\tau}_f(t)=\hat{\tau}_f(t)-\tau_f$ is the estimation error and $\tilde{B}^\top=[ 0\ 0 \ 1]$.
Note that (\ref{actual tracking error}) contains the same ideal dynamics of (\ref{ideal tracking error}), plus a disturbance term given by the estimation error. 

\subsection{Third Step: Standard MRAC Adaptive Law}
The next step is to design the adaptive law for $\hat{\tau}_f$ such that $\tilde{x}$ in (\ref{actual tracking error}) can still converge to zero. 
In the following, we will omit time dependence when obvious.
An adaptive law to estimate the unknown powertrain parameter is proposed as
\begin{equation}
    \dot{\hat{\tau}}_f=-\gamma_1 \tilde{B}^\top P\tilde{x}\phi,
    \label{adaptive law}
\end{equation}
where $\gamma_1$ is a positive learning gain
and $P=P^\top$ is the positive definite solution of the Lyapunov equation $\bar{A}^\top P+P\bar{A}+Q=0$ with $Q>0$ an arbitrary positive definite matrix.

The following stability result holds.
\begin{theorem}
    Consider the closed-loop system formed by
    the vehicle dynamics (\ref{lower dimensional closed loop systems}), the spacing policy (\ref{spacing error}), and the reference
    model (\ref{reference model}). The controller (\ref{actual controller}) with adaptive law (\ref{adaptive law}), achieves (\ref{closed loop stability}).
\end{theorem}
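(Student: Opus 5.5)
We will use a standard MRAC Lyapunov argument, adapted to the unknown gain $\tau_f^{-1}$ multiplying the estimation error in (\ref{actual tracking error}). The candidate is
\begin{equation*}
V(\tilde{x},\tilde{\tau}_f)=\tilde{x}^\top P\tilde{x}+\frac{1}{\gamma_1\tau_f}\tilde{\tau}_f^2 ,
\end{equation*}
where $P$ is the solution of the Lyapunov equation used in (\ref{adaptive law}). The weight $1/\tau_f$ on the parameter term is admissible because $\tau_f>0$. It lets the cross term produced by (\ref{actual tracking error}) cancel exactly, even though $\tau_f$ is unknown; $\tau_f$ enters only the analysis, not the implemented law.

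First, since $\tau_f$ is constant, $\dot{\tilde{\tau}}_f=\dot{\hat{\tau}}_f$. Differentiating $V$ along (\ref{actual tracking error}) and (\ref{adaptive law}) gives
\begin{equation*}
\dot V=-\tilde{x}^\top Q\tilde{x}+\frac{2\tilde{\tau}_f}{\tau_f}\tilde{x}^\top P\tilde{B}\phi+\frac{2\tilde{\tau}_f}{\gamma_1\tau_f}\bigl(-\gamma_1\tilde{B}^\top P\tilde{x}\phi\bigr)=-\tilde{x}^\top Q\tilde{x}\le 0 .
\end{equation*}
Hence $V$ is nonincreasing, so $\tilde{x},\tilde{\tau}_f\in\mathcal{L}_\infty$. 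Integrating $\dot V$ also gives $\tilde{x}\in\mathcal{L}_2$.

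Next we show boundedness of the remaining signals. The exogenous input $a_p$ is bounded because $u_p$ is bounded and the powertrain filter in (\ref{standard dyanamics}) is stable. The reference model (\ref{reference model}) has $\bar{A}$ Hurwitz, so $\bar{x}\in\mathcal{L}_\infty$. Therefore $x=\tilde{x}+\bar{x}\in\mathcal{L}_\infty$, and then $\phi=\bar{K}x+h^{-1}a_p\in\mathcal{L}_\infty$. From (\ref{actual tracking error}) it follows that $\dot{\tilde{x}}\in\mathcal{L}_\infty$, since every term on the right-hand side is bounded. Barbalat's lemma, in the form $\tilde{x}\in\mathcal{L}_2\cap\mathcal{L}_\infty$ with $\dot{\tilde{x}}\in\mathcal{L}_\infty$, yields $\tilde{x}(t)\to 0$. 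The control signal $u_f$ in (\ref{actual controller}) is bounded as well, so the closed loop is well posed.

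Finally, we transfer convergence to the spacing error. The first component of $x$ is $e_f$, so $e_f=e_{\bar f}+\tilde{x}_1$. It therefore suffices that $e_{\bar f}\to 0$. This is the disturbance-decoupling property of the reference model noted after (\ref{reference model matrices}): with $\bar{x}=[e_{\bar f}\ \nu_{\bar f}\ a_{\bar f}]^\top$, the controller (\ref{reference model input}) has the form (\ref{input matrix}). By \cite{c28}, $a_p$ does not affect $e_{\bar f}$, which then evolves as an autonomous stable mode and decays to zero.

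This last step is the main point to verify carefully. One can check it directly: the transfer function from $a_p$ to the first component of $\bar{x}$ is zero for $(\bar{A},\bar{G})$ in (\ref{reference model matrices}). Equivalently, the vector $\bar{G}=[0\ 1\ h^{-1}]^\top$ lies in an $\bar{A}$-invariant subspace contained in $\ker[1\ 0\ 0]$. Concretely, $\bar{A}\bar{G}=[0\ -h^{-1}\ \star]^\top$, and one shows $\star=-h^{-2}$, so $\bar{A}\bar{G}=-h^{-1}\bar{G}$ and the subspace spanned by $\bar{G}$ is invariant. Once this is established, $e_f=e_{\bar f}+\tilde{x}_1\to 0$, which is (\ref{closed loop stability}).
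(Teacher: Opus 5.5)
Your proposal is correct and follows the paper's proof essentially step for step. It uses the same Lyapunov function up to a factor of $2$, with the $1/\tau_f$ weight cancelling the cross term, and the same boundedness chain $a_p,\bar{x},x,\dot{\tilde{x}}\in\mathcal{L}_\infty$ with Barbalat's lemma to obtain $\tilde{x}\to 0$, followed by $e_f=\bar{H}\bar{x}+\bar{H}\tilde{x}$. Your explicit check that $\bar{A}\bar{G}=-h^{-1}\bar{G}$ is a welcome addition: it shows $a_p$ does not enter $e_{\bar f}$, so $e_{\bar f}\to 0$ because $\bar{A}$ is Hurwitz, whereas the paper simply asserts $\lim_{t\to\infty}\bar{H}\bar{x}(t)=0$ without justification.
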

\begin{proof}
    Consider the Lyapunov function candidate
    \begin{equation}\label{Lyacan}
        V(\tilde{x},\tilde{\tau}_f)=\frac{1}{2}\tilde{x}^\top P \tilde{x}+\frac{\tilde{\tau}_f^2}{2\gamma_1\tau_f}.
    \end{equation}

    Differentiating (\ref{Lyacan}) along the trajectory of (\ref{actual tracking error}) with the adaptive law (\ref{adaptive law}) {\color{black}gives}
    \begin{equation}\label{Lycande}
            \dot{V}(\tilde{x},\tilde{\tau}_f) = -\frac{1}{2}\tilde{x}^\top Q \tilde{x}\le0,
    \end{equation}
    {\color{black}indicating} that the origin $(\tilde{x},\tilde{\tau}_f)=(0,0)$ is uniformly stable. 
    This implies that $\tilde{x}(\cdot),\ \tilde{\tau}_f (\cdot)$ are bounded.
    
    Now we use Barbalat's lemma to {\color{black}show} convergence of $\tilde{x}$: the lemma states that
    if a signal $g(\cdot)$ and its time derivative satisfy $g, \dot{g}\in\mathcal{L}_{\infty}$ and $g\in\mathcal{L}_2$,
    then $g(t)\rightarrow0$ as $t\rightarrow\infty$. 
    {\color{black}Recall} that $\mathcal{L}_2$ and $\mathcal{L}_{\infty}$ indicate the sets of signals bounded in the $\mathcal{L}_2$ and $\mathcal{L}_{\infty}$ norms, respectively.  

   One can obtain from (\ref{standard dyanamics}) that $a_p\in\mathcal{L}_{\infty}$, because $u_p(\cdot)$ is assumed to be bounded. 
    From (\ref{reference model}), we have $\bar{x}\in\mathcal{L}_\infty$, which implies $x=\bar{x}+\tilde{x}\in\mathcal{L}_\infty$.
    As a result, one can conclude that $\dot{\tilde{x}}\in\mathcal{L}_\infty$ from (\ref{actual tracking error}).
    To prove $\tilde{x}\in\mathcal{L}_2$, integrate (\ref{Lycande})
    \begin{equation}
        \frac{1}{2}\int_{0}^{\infty} {\tilde{x}(t)}^\top Q \tilde{x}(t) dt = V(0)-V_\infty,
        \label{Lycandeint}
    \end{equation}
    where $V_\infty=\lim_{t\rightarrow\infty}V(t)$ is bounded as $\dot{V}\le0$ and this implies that $\tilde{x}\in\mathcal{L}_2$. 
    Hence, through Barbalat's lemma, one can obtain $\tilde{x}\rightarrow0$ as $t\rightarrow\infty$.
    Since $x=\bar{x}+\tilde{x}$, we have
    \begin{equation*}
            \lim_{t\rightarrow\infty}e_f(t)=\lim_{t\rightarrow\infty}\bar{H}\bar{x}(t)+\lim_{t\rightarrow\infty}\bar{H}\tilde{x}(t)=0,
    \end{equation*}
    where $\bar{H}=[1\ 0\ 0]$.
    This ends the proof.
\end{proof}

\section{Concurrent Learning Design for Platooning}
As standard in MRAC, Theorem 1 does not guarantee convergence of $\hat{\tau}_f$ to $\tau_f$. 
With PE conditions, such convergence is guaranteed~\cite{c30}. 
{\color{black}For the scalar $\phi$ in Theorem 1, imposing PE could still be restrictive in practice: a platoon converging to constant velocity would converge to zero acceleration and fail to meet PE.} 
To address this problem, we extend MRAC in the sense of CL.

The proposed MRAC-CL design is
\begin{equation}
    \dot{\hat{\tau}}_f=-\gamma_1 \tilde{B}^\top P\tilde{x}\phi-\sum_{j=0}^{k}\gamma_2 \phi_j\bar{\epsilon}_j,
    \label{adaptive law CL}
\end{equation}
where $\gamma_2$ is another positive learning gain,
$\phi_j=\phi(x_j,{a_p}_j)$ with {\color{black}$j\in\{0,1,\dots,k\}$} indexing stored data points, and $\bar{\epsilon}_j$ calculated according to the following remark.
\begin{remark}[{\color{black}Past samples}]
    Let us rewrite (\ref{lower dimensional closed loop systems}) as
    {\color{black} $\dot{x}=\bar{A} x+\bar{G} a_p+\tilde{B}\tau_f^{-1}\epsilon$}, with $\epsilon =u_f-u_f^*$, 
    and define $\bar{\epsilon}=\tau_f^{-1} \epsilon$. 
    We have {\color{black}$\bar{\epsilon}=\tilde{B}^{\dag}(\dot{x}-\bar{A} x-\bar{G} a_p)$},
    with {\color{black}$\tilde{B}^\dag={(\tilde{B}^\top \tilde{B})}^{-1}\tilde{B}^\top$} the pseudo-inverse of $\tilde{B}$. 
    {\color{black}Being the matrices $\bar{A}$, $\bar{G}$, $\tilde{B}$ known, $\bar{\epsilon}_j$ can be obtained from past samples of the signals needed to calculate $\bar{\epsilon}$: this is in line with the collection of past samples in standard CL~\cite{c21,c22}.
    An integral CL was proposed in~\cite{c35} where the need for higher order derivatives as in~\cite{c21,c22} is avoided.}
\end{remark}
The following stability and convergence result holds.
\begin{theorem}\label{main stability theorem}
    Consider the closed-loop system formed by the vehicle dynamics (\ref{lower dimensional closed loop systems}), the spacing policy (\ref{spacing error}) and the reference model (\ref{reference model}).
    The controller (\ref{actual controller}) with adaptive law (\ref{adaptive law CL}), achieves (\ref{closed loop stability}). 
    Furthermore, if $\rank[\phi_0\ \phi_1\ \cdots\ \phi_k]=1$, i.e., 
    there exists at least one sample $\phi_j \neq 0,\ {\color{black}j\in\{0,1,\dots,k\}}$,
    then $\hat{\tau}_f\rightarrow\tau_f$ as $t\rightarrow\infty$.
\end{theorem}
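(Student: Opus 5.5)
The plan is to reuse the Lyapunov function (\ref{Lyacan}) from the proof of Theorem 1. The first step is to express the concurrent-learning term in terms of the current estimation error. From the Remark on past samples, $\epsilon=u_f-u_f^*$, and by (\ref{actual controller}) and (\ref{ideal controller}) this equals $(\hat{\tau}_f-\tau_f)\phi=\tilde{\tau}_f\phi$. I will interpret the stored quantity so that, in the adaptive law (\ref{adaptive law CL}),
\begin{equation*}
\bar{\epsilon}_j=\frac{\tilde{\tau}_f(t)}{\tau_f}\,\phi_j .
\end{equation*}
That is, the stored regressor $\phi_j$ is evaluated against the \emph{current} estimate $\hat{\tau}_f(t)$, as in standard CL~\cite{c21}. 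This step is the one I expect to need the most care. Unlike in~\cite{c21}, $\bar{\epsilon}_j$ carries the extra unknown gain $\tau_f^{-1}$, and the Lyapunov weighting must absorb it. Since $\tau_f$ is constant, $\dot{\tilde{\tau}}_f=\dot{\hat{\tau}}_f$, so the law becomes
\begin{equation*}
\dot{\tilde{\tau}}_f=-\gamma_1\tilde{B}^\top P\tilde{x}\phi-\frac{\gamma_2}{\tau_f}\Big(\sum_{j=0}^k\phi_j^2\Big)\tilde{\tau}_f .
\end{equation*}

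Next I differentiate $V$ in (\ref{Lyacan}) along (\ref{actual tracking error}) and this law. The cross terms $\tilde{x}^\top P\tilde{B}\tilde{\tau}_f\phi/\tau_f$ cancel exactly as in Theorem 1, because the weight $1/(\gamma_1\tau_f)$ on $\tilde{\tau}_f^2$ matches the gain $\tau_f^{-1}$ in (\ref{actual tracking error}). This leaves
\begin{equation*}
\dot V=-\frac{1}{2}\tilde{x}^\top Q\tilde{x}-\frac{\gamma_2}{\gamma_1\tau_f^2}\Big(\sum_{j=0}^k\phi_j^2\Big)\tilde{\tau}_f^2\le 0 .
\end{equation*}
The sign of the second term is guaranteed because $\tau_f>0$, $\gamma_1,\gamma_2>0$, and the sum of squares is nonnegative.

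With $\dot V\le0$, the argument of Theorem 1 applies verbatim. $\tilde{x}$ and $\tilde{\tau}_f$ are bounded. $a_p$ and $\bar{x}$ are bounded, hence $x$ and $\phi$ are bounded, so $\dot{\tilde{x}}\in\mathcal{L}_\infty$. Integrating $\dot V$ gives $\tilde{x}\in\mathcal{L}_2$. Barbalat's lemma then yields $\tilde{x}\to0$, and therefore $e_f=\bar{H}(\bar{x}+\tilde{x})\to0$, which is (\ref{closed loop stability}).

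For parameter convergence, let $\rank[\phi_0\ \cdots\ \phi_k]=1$ and set $\sigma=\sum_j\phi_j^2>0$. Then
\begin{equation*}
\dot V\le-\frac{\lambda_{\min}(Q)}{2}\|\tilde{x}\|^2-\frac{\gamma_2\sigma}{\gamma_1\tau_f^2}\tilde{\tau}_f^2\le-cV
\end{equation*}
for some $c>0$, using $V\le\frac{1}{2}\lambda_{\max}(P)\|\tilde{x}\|^2+\tilde{\tau}_f^2/(2\gamma_1\tau_f)$. The comparison lemma gives exponential decay of $V$, hence $\tilde{\tau}_f\to0$, i.e.\ $\hat{\tau}_f\to\tau_f$. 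Since $\phi_j$ is scalar, the rank condition reduces to a single nonzero sample, with no PE requirement on $\phi(t)$. If the data are recorded at some time $t^*>0$ rather than available initially, the same bound applies for $t\ge t^*$, with $\dot V\le0$ on $[0,t^*)$ keeping everything bounded before then.
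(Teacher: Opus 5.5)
Your proposal is correct and follows the paper's proof. It uses the same Lyapunov function (\ref{Lyacan}) and the same cancellation giving $\dot V=-\frac{1}{2}\tilde{x}^\top Q\tilde{x}-\frac{\gamma_2}{\gamma_1\tau_f^2}\tilde{\tau}_f^2\sum_{j}\phi_j^2$, which the paper silently obtains via exactly your reading $\bar{\epsilon}_j=\tilde{\tau}_f(t)\phi_j/\tau_f$, and it uses the same $\dot V\le -cV$ argument once $\Omega=\sum_j\phi_j^2>0$.

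You are actually more careful than the paper in two places. The paper does not re-run the Barbalat step for the first claim. Its explicit rate in (\ref{derivative of CL}) should read $\min(\lambda_{\min}(Q),2\Omega\gamma_2(\gamma_1\tau_f^2)^{-1})/\max(\lambda_{\max}(P),(\gamma_1\tau_f)^{-1})$, which is the constant your bound produces.
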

\begin{proof}
    Consider the Lyapunov function candidate (\ref{Lyacan}) and let $\eta=\Vert\tilde{x}\Vert^2+\tilde{\tau}_f^2$. 
    Then, we have
    \begin{align*}
        \frac{1}{2}{\min}(\lambda_{\min}(P),(\gamma_1\tau_f)^{-1})\eta &\leq V(\tilde{x},\tilde{\tau}_f)\\ 
        &\leq \frac{1}{2}{\max}(\lambda_{\max}(P),(\gamma_1\tau_f)^{-1})\eta,
    \end{align*}
    where $\lambda_{\min}(\cdot), \lambda_{\max}(\cdot)$ denote the smallest and the largest eigenvalue.
    Differentiating (\ref{Lyacan}) along the trajectory of (\ref{actual tracking error}) with the adaptive law (\ref{adaptive law CL}) {\color{black}results in}
    \begin{equation*}
            \dot{V}(\tilde{x},\tilde{\tau}_f) = -\frac{1}{2}\tilde{x}^\top Q \tilde{x} - \frac{\gamma_2}{\gamma_1\tau_f^2}\tilde{\tau}_f^2\sum_{j=0}^{k}\phi_j^2.
    \end{equation*}
		
\begin{figure}[t] 
	\centering
	\includegraphics[width=8.9cm]{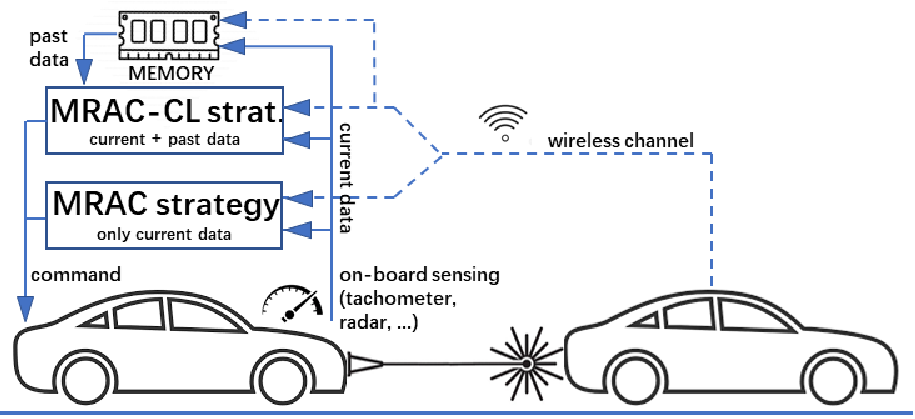}\vspace{-0.15cm}  
	\caption{{\footnotesize {\sf In CACC based on MRAC, only current data are used for control and adaptation. CACC based on MRAC-CL also uses past data.}\vspace{-0.45cm}}} 
	\label{CACCdraw}
\end{figure}
\begin{figure*}[thpb]\label{error and velocity} 
    \centering
    \subfigure[Non-adaptive control (\ref{input matrix}) when $u_0$ guarantees PE.]{                           
        \includegraphics[width=0.45\linewidth]{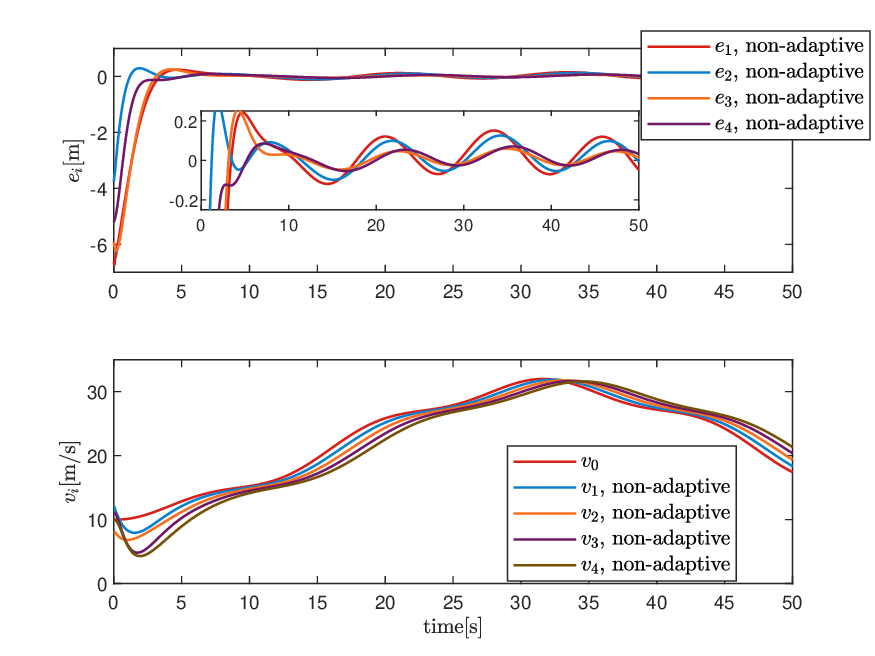}} 
    \subfigure[MRAC (\ref{actual controller}), (\ref{adaptive law}) and MRAC-CL (\ref{actual controller}), (\ref{adaptive law CL}) when $u_0$ guarantees PE.]{                            
        \includegraphics[width=0.45\linewidth]{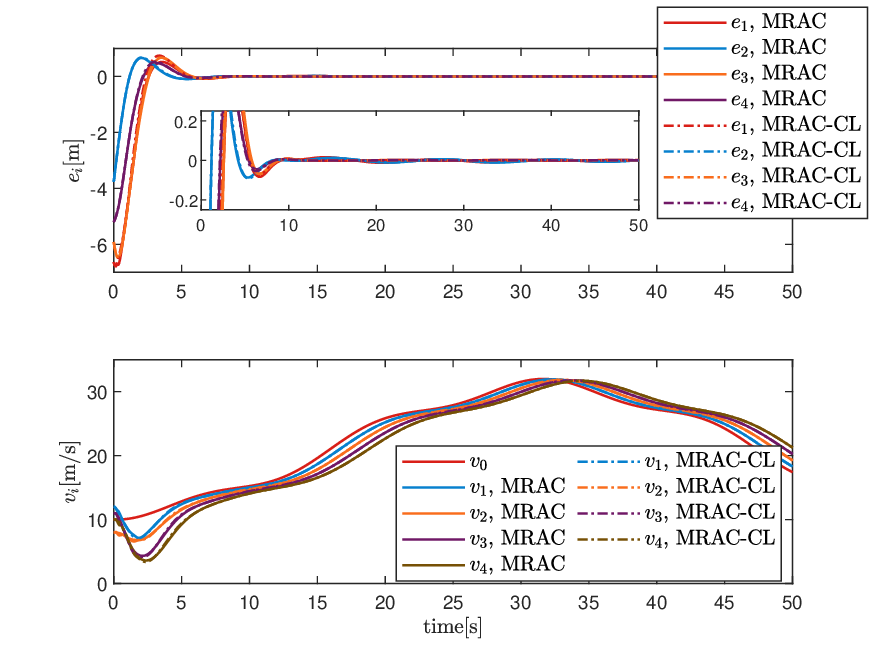}}  

    \subfigure[Non-adaptive control (\ref{input matrix}) when $u_0$ does \emph{not} guarantee PE.]{                        
        \includegraphics[width=0.435\linewidth]{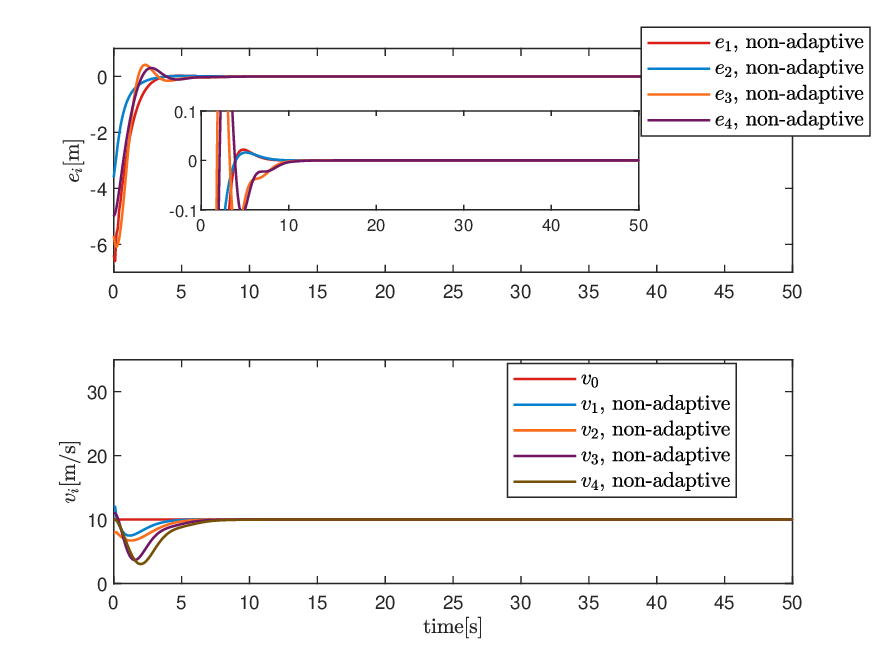}}
    \subfigure[MRAC (\ref{actual controller}), (\ref{adaptive law}) and MRAC-CL (\ref{actual controller}), (\ref{adaptive law CL}) when $u_0$ \newline does \emph{not} guarantee PE.]{                           
        \includegraphics[width=0.435\linewidth]{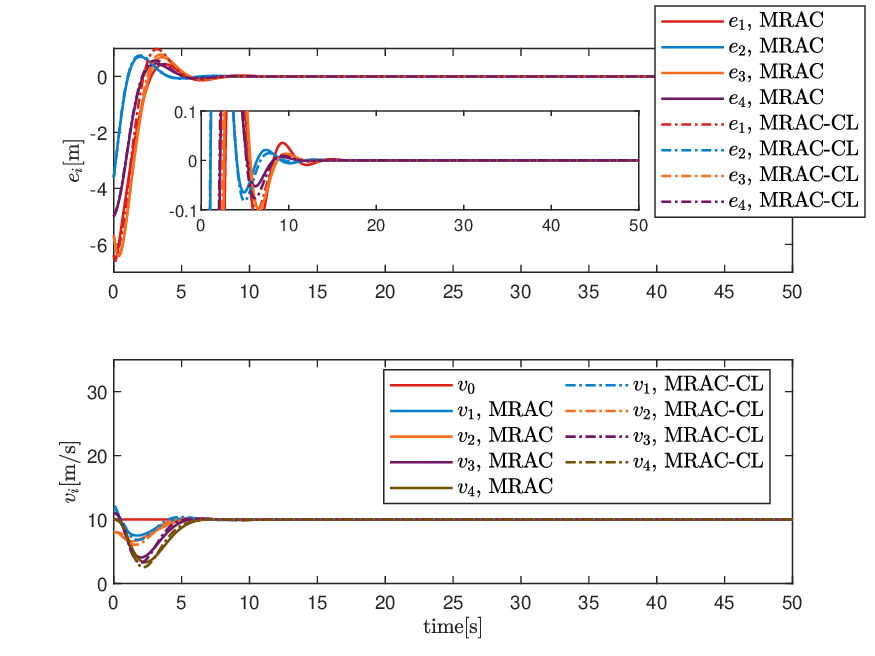}} 
    \caption{Spacing errors and velocities with different controllers (non-adaptive,  adaptive). 
		When $u_0$ guarantees PE, the spacing errors for non-adaptive control oscillate, indicating no disturbance decoupling. When $u_0$ does \emph{not} guarantee PE, all spacing errors converge due to lack of disturbance.}
\end{figure*}

  If there exists at least one sample $\phi_j \neq 0,\ {\color{black}j\in\{0,1,\dots,k\}}$, 
  we have $\Omega=\sum_{j=0}^{k}{\phi_j}^2>0$ 
  and $\dot{V}(\tilde{x},\tilde{\tau}_f)\leq-\frac{1}{2}\lambda_{\min}(Q)\tilde{x}^\top \tilde{x}-\Omega\frac{\gamma_2}{\gamma_1{\tau_f}^2}\tilde{\tau}_f^2$.
  As a result, we obtain
  \begin{equation}\label{derivative of CL}
    \color{black}\dot{V}(\tilde{x},\tilde{\tau}_f)\leq-\frac{{\max}(\lambda_{\min}(Q),2\Omega\gamma_2{(\gamma_1{\tau_f}^2)}^{-1})}{{\min}(\lambda_{\min}(P),{(\gamma_1\tau_f)}^{-1})}V(\tilde{x},\tilde{\tau}_f).  
  \end{equation}
  In other words, $V(\tilde{x},\tilde{\tau}_f)$ satisfies the requirements of the Lyapunov Stability Theorem, meaning that $\tilde{x},\tilde{\tau}_f \rightarrow 0$ asymptotically. 
  This ends the proof.
\end{proof}
\begin{remark}[Contribution of Theorem 2]
    {\color{black}Theorem 2 is able to handle the unknown gain $\tau_f^{-1}$ in the error dynamics (\ref{actual tracking error}): 
    this gain is neglected in existing CL~\cite{c21,c35}.
    This was possible by including $\tau_f$ in (\ref{Lyacan}) and by modifying the definition of $\bar{\epsilon}$ in (\ref{adaptive law CL}) accordingly (cf. Remark 3): although unknown, $\tau_f$ can appear in (\ref{Lyacan}) since the Lyapunov
    function is used only for stability analysis.}
    \end{remark}
{\color{black}
    Although sharing the same Lyapunov function~(\ref{Lyacan}), 
    Theorem 1 can only guarantee simple Lyapunov stability of the origin, 
    i.e., $\tilde{\tau}_f(\cdot)$ can only be proven a bounded signal; 
    meanwhile, Theorem 2 guarantees asymptotic stability of the origin, implying convergence of $\tilde{\tau}_f$ to zero.
    This is possible thanks to the additional term $-\sum_{j=0}^{k}\gamma_2 \phi_j\bar{\epsilon}_j$ in (\ref{adaptive law CL}). 
    Let us recall from the literature that simple Lyapunov stability at the origin is one of the main reasons for lack of robustness margins in adaptive loops~\cite[Sect. 3.11]{c30}, whereas convergence to the true parameters guarantee robustness margins.}

{\color{black}Let us summarize the steps for implementing the proposed platooning strategy based on MRAC-CL.}  
    \begin{algorithm}
        \renewcommand{\thealgorithm}{}
        {\color{black}
        \floatname{algorithm}{\color{black}Platooning strategy based on MRAC-CL}
        \caption{}
        \begin{algorithmic}[1]
        \Require nominal powertrain constant $\tau_{\bar{f}}$, positive definite $Q$, update gains $\gamma_1,\ \gamma_2$, stored data points $\bar{\epsilon}_j,\ \phi_j,\ j\in\{0,1,\dots,k\}$, current state $x(t) = [e_f(t)\ \nu_f(t)\ a_f(t) ]^\top$ from on-board sensors, communicated signal $a_p(t)$. 
        \Ensure control $u_f(t)$, estimate $\hat{\tau}_f(t)$.  
        \vspace{0.15cm}
        \State Run the reference model (\ref{reference model}) with controller (\ref{reference model input}) to obtain the reference state $\bar{x}(t)$;
        \State Form $\phi(x(t),a_p(t))$ based on the current $x(t),\ a_p(t)$, and run the adaptive controller (\ref{actual controller});
        \State Calculate the tracking error $\tilde{x}(t)=x(t)-\bar{x}(t)$, and update $\hat{\tau}_f$ via (\ref{adaptive law CL}).
        \end{algorithmic}
        }
    \end{algorithm}

\section{Simulations}
\begin{figure*}[thb]
\centering
    \subfigure[MRAC (\ref{actual controller}), (\ref{adaptive law}) and MRAC-CL (\ref{actual controller}), (\ref{adaptive law CL}) when $u_0$ \newline guarantees PE.]{
        \includegraphics[width=0.435\linewidth]{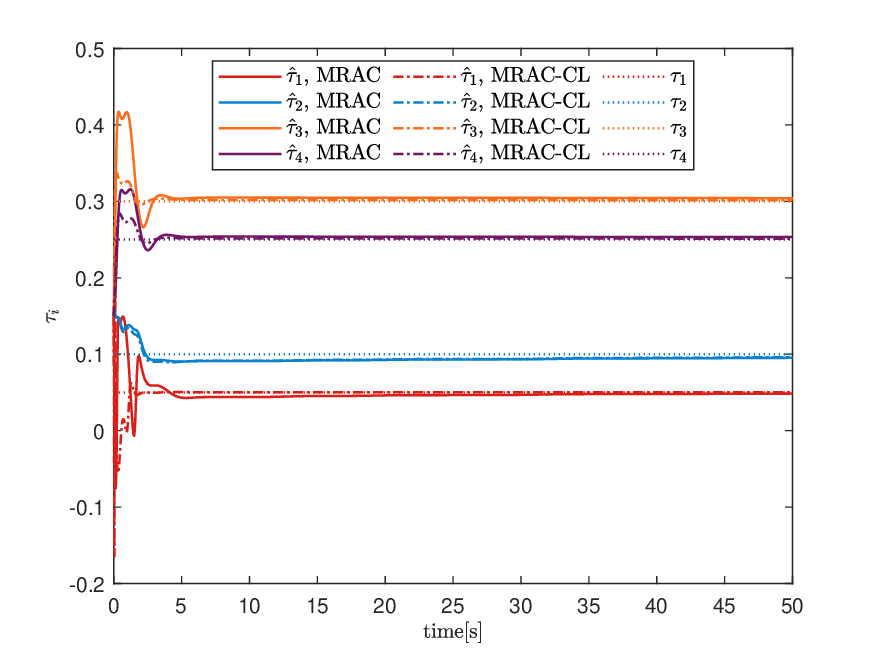}}   
    \subfigure[MRAC (\ref{actual controller}), (\ref{adaptive law}) and MRAC-CL (\ref{actual controller}), (\ref{adaptive law CL}) when $u_0$ \newline does \emph{not} guarantee PE.]{
        \includegraphics[width=0.435\linewidth]{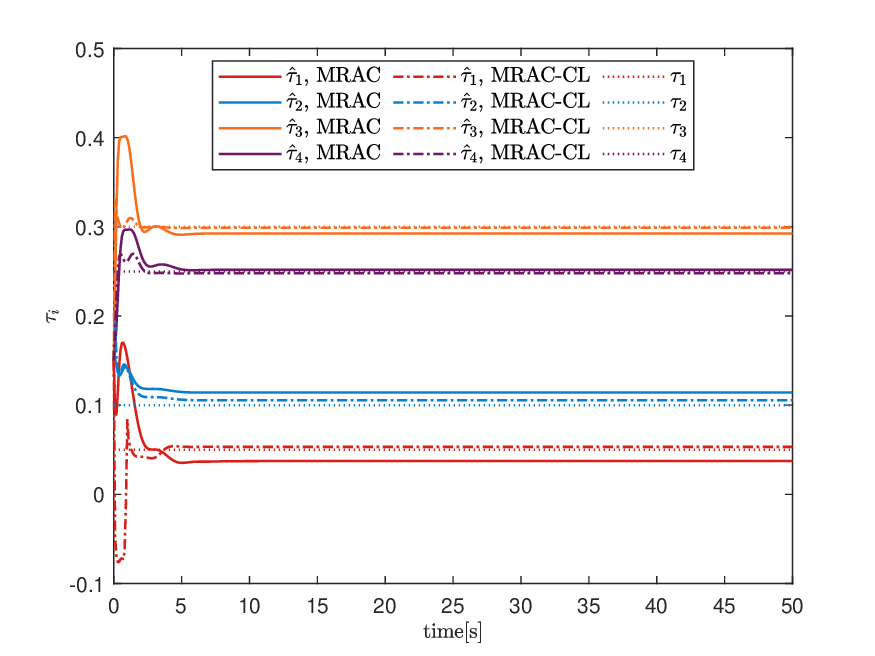}}
    \caption{Parameter convergence of $\hat{\tau}_{i}$ with different adaptive controllers: with PE, both controllers converge to the actual parameters, but without PE only MRAC-CL achieves convergence.}
    \label{tau}
\end{figure*}
To verify the theoretical analysis, this section presents simulation results of a platoon with five vehicles (one leader indexed as $0$ and four following vehicles indexed as $1,\ 2,\ 3,\ 4$)\footnote{The results in Theorems and 1 and 2 given for a predecessor-follower pair can be extended to platoons of arbitrary length, thanks to the disturbance decoupling property.}.
The initial conditions and powertrain time constants are shown in Table~\ref{table1}:
in addition, we use $\theta_1=1,\ \theta_2=1,\ h=0.72,\ Q=0.7I,\ \gamma_1=0.3,\ \gamma_2=0.01$ for the controller and $\tau_{\bar{f}} = 0.5$ as nominal powertrain time constant. 
Since $\phi$ is a scalar, we simply select one sample $\phi_0$ in (\ref{adaptive law CL}). 
\begin{table}[h]
    \caption{Actual powertrain constants and initial conditions}
    \label{table1}
		\centering
    \begin{tabular}{|c||c|c|c|c|}
    \hline
    $i$ & $\tau_i$ & $d_i(0)$ & $v_i(0)$ & $a_i(0)$\\
    \hline
    $0$ & 0.2 & 0 & 10 & 0\\
    \hline
    $1$ & 0.05 & -2 & 12 & 0\\
    \hline
    $2$ & 0.1 & -4 & 8 & 0\\
    \hline
    $3$ & 0.3 & -6 & 11 & 0\\
    \hline
    $4$ & 0.25 & -8 & 10 & 0\\
    \hline
    \end{tabular}
    \end{table}

\noindent In the following we show three controllers:
\begin{enumerate}
    \item a non-adaptive scenario with controller as in (\ref{input matrix}), but with  incorrect knowledge of $\tau_i,\ i\in\{0,1,2,3,4,5\}$ (we select them as $0.15$); 
    \item an adaptive scenario where the knowledge of $\tau_i,\ i\in\{0,1,2,3,4,5\}$ is not available, but estimated online with MRAC controller (\ref{actual controller}) and adaptive law (\ref{adaptive law});
    \item as in 2), the knowledge of $\tau_i,\ i\in\{0,1,2,3,4,5\}$ is not available, but estimated online with MRAC-CL controller (\ref{actual controller}) and adaptive law (\ref{adaptive law CL}).
\end{enumerate}
The three scenarios are considered in the presence and in the absence of PE. 
To simulate PE, we let the leader proceed with $u_0=\sin(0.1t)+0.5\sin(0.5t)$. 
To simulate absence of PE, we let the leader proceed with $u_0=0$. {\color{black}As mentioned at the beginning of Sect. IV, a platoon converging to zero acceleration (i.e., constant velocity) would fail to meet PE.} 

{\color{black}As it can be seen in Fig. 2(a), disturbance decoupling is not achieved by a linear control with incorrect knowledge} of $\tau_i,\ i\in\{1,2,3,4\}$: {\color{black}however, both adaptive strategies achieve disturbance decoupling}, as the spacing errors converge to zero despite the oscillating $u_0(\cdot)$. In Fig. 2(b) the spacing error converges to zero for all three control strategies {\color{black}for the reason that $u_0=0$, i.e., there is no disturbance.} The convergence of $\hat{\tau}_i,\ i\in\{1,2,3,4\}$ for MRAC and MRAC-CL is shown in Fig.~\ref{tau}. 
In the presence of PE, both MRAC and MRAC-CL converge to the actual $\tau_i$ (with MRAC-CL being faster); 
however, in the absence of PE, only MRAC-CL converges to the actual $\tau_i$, which is not the case for MRAC. When calculating the percentage estimation error from Fig.~\ref{tau}, i.e., $\left\|\tilde{\tau}_i\right\|/\left\|\tau_i\right\| \%$, we find that the proposed MRAC-CL has 4\% error as compared to 11\% of the standard MRAC. 

\section{Conclusions}
We proposed a new adaptive longitudinal platooning strategy in the framework of concurrent learning. 
The proposed strategy improves available platooning strategies since convergence to the true parameters can be achieved without requiring persistence of excitation in the vehicle behavior. 
{\color{black}
Interesting topics for future work are integral CL implementation in line with~\cite{c35}, output-feedback implementation in line with~\cite{c29}, and switched systems implementation with switching between CACC and ACC~\cite{c12}. 
}





\bibliographystyle{IEEEtran}
\bibliography{reference}
\end{document}